\newtheorem{proposition}{Proposition}
\newtheorem{theorem}{Theorem}
\newtheorem{lemma}{Lemma}
\newtheorem{corollary}{Corollary}
\theoremstyle{definition}
\newtheorem{definition}{Definition}
\newtheorem{example}{Example}
\theoremstyle{remark}
\newtheorem{remark}{Remark}
\theoremstyle{plain}
\newtheorem*{keyword}{Keywords}
\newcommand{\sep}{\quad\textbullet\quad}
\begin{document}

\newcommand{\myfunding}{The project has received funding from the European Research Council (ERC) under the European Union’s Horizon 2020 research and innovation programme under grant agreement No~787367 (PaVeS)}
\newcommand{\myemail}{raskin@mccme.ru, raskin@in.tum.de}
\newcommand{\myaddress}{Department of Computer Science, TU Munich 
}

\title{Population protocols with unreliable communication\footnote{\myfunding}
}

\author{Mikhail Raskin\,\orcidlink{0000-0002-6660-5673}\\ \myemail \\ \myaddress}

\maketitle


\begin{abstract}
Population protocols are a model of distributed computation intended for the study of networks of independent computing agents with dynamic communication structure.  Each agent has a finite number of states, and communication occurs nondeterministically, allowing the involved agents to change their states based on each other's states.

In the present paper we study unreliable models based on population protocols and their variations from the point of view of expressive power.  We model the effects of message loss.  We show that for a general definition of protocols with unreliable communication with constant-storage agents such protocols can only compute predicates computable by immediate observation (IO) population protocols (sometimes also called one-way protocols).  Immediate observation population protocols are inherently tolerant to unreliable communication and keep their expressive power under a wide range of fairness conditions.  We also prove that a large class of message-based models that are generally more expressive than IO becomes strictly less expressive than IO in the unreliable case.

\end{abstract}

\begin{keyword}
population protocols \sep message loss \sep expressive power
\end{keyword}

%

\section{Introduction}

Population protocols have been
introduced in \cite{conf/podc/AngluinADFP04,journals/dc/AngluinADFP06}
as a restricted yet useful subclass of general distributed protocols.
In population protocols each agent has a constant amount of local storage,
and during the protocol execution pairs of agents
are selected and permitted to interact.
The selection of pairs is assumed to be done by an adversary
bound by a fairness condition.
The fairness condition ensures
that the adversary cannot trivially stall the protocol.
A typical fairness condition requires
that every configuration that stays reachable
during an infinite execution
is reached infinitely many times.

Population protocols have been studied
from various points of view,
such as expressive power \cite{journals/dc/AngluinAER07},
verification complexity \cite{conf/concur/EsparzaGMW18},
time to convergence \cite{conf/wdag/AngluinAE06,conf/wdag/DotyS15},
privacy \cite{conf/opodis/Delporte-GalletFGR07},
impact of different interaction scheduling \cite{conf/opodis/ChatzigiannakisDFMS09} etc.
Multiple related models have been introduced.
Some of them change or restrict
the communication structure: this is the case for
immediate, delayed, and queued transmission and observation \cite{journals/dc/AngluinAER07},
as well as for broadcast protocols \cite{conf/lics/EmersonN98}.
Some explore the implications of adding limited amounts of storage
(below the usual linear or polynomial storage permitted in
traditional distributed protocols):
this is the case for
community protocols \cite{Guerraoui07abstract}
(which allow an agent to recognise a constant number
of other agents),
PALOMA \cite{journals/corr/abs-1004-3395}
(permitting logarithmic amount of local storage),
mediated population protocols \cite{journals/tcs/MichailCS11}
(giving some constant amount of common
storage to every pair of agents),
and others.

The original target application
of population protocols and related models
is modelling networks of restricted sensors,
starting from
the original paper \cite{conf/podc/AngluinADFP04} on population protocols.
On the other hand, verifying distributed algorithms benefits
from translating the algorithms in question or their parts
into a restricted setting, as most problems are undecidable
in the unrestricted case.
Both applications motivate study of fault tolerance.
Some papers on population protocols
and related models
\cite{conf/dcoss/Delporte-GalletFGR06,Guerraoui07abstract,Angluin08asimple,Guerraoui_namestrump}
consider questions of fault tolerance,
but in the context of expressive power the fault is typically
expected to be either a total agent failure
or a Byzantine failure.
There are some exceptions such as a study
of fine-grained notions of unreliability
\cite{diluna2016population,diluna2016power}
in the context of step-by-step simulation of population protocols
by distributed systems with binary interactions.
However, these studies answer a completely different set of questions,
as they are concerned with simulating a protocol as a process
as opposed to designing a protocol to achieve a given result
no matter in what way.

In a practical context, many distributed algorithms
pay attention to a specific kind of failure: message loss.
While the eventual convergence approach
typical in study of population protocols
escapes the question of availability \emph{during}
a temporary network partition
(the problem studied, for example, in \cite{friedman1996trading}),
the onset of a network partition may include
message loss in the middle of an interaction.
In such a situation the participants do
not always agree
whether the interaction has succeeded or failed.
In terms of population protocols,
one of the agents assumes that an interaction
has happened and updates the local state,
while a counterparty thinks the interaction has failed
and keeps the old state.

In the present paper we study the expressive power of 
a very wide class of
models with interacting constant-storage agents when
unreliability of communication is introduced.
This unreliability corresponds to the loss of atomicity of interactions
due to message loss.
Indeed, in the distributed systems ensuring that both sides agree 
on whether the interaction has taken place is often the costliest part;
a special case of it is ``exactly-once'' message arrival,
known to be much
more complex to ensure than ``at most-once''.
We model such loss of atomicity by allowing some agents to update their
state based on an interaction, while other agents keep their original 
state because they assume the interaction has failed.
For a bit more generality, corresponding, for example,
to request-response interactions 
with the response being impossible if the request is lost,
we allow to require that some agents can only update their state 
if the others do.

We consider the expressive power in the context
of computing predicates by protocols with eventual convergence
of individual opinions.
We show that under very general conditions
the expressive power of  protocols with unreliable communication
coincides with the expressive power
of immediate observation population protocols.
Immediate observation population protocols, 
modelling interactions where an agent can observe
the state of another one
without the observee noticing,
provide
a model that inherently tolerates unreliability
and is considered a relatively weak model
in the fully reliable case.
This model also has other nice properties,
such as relatively low complexity ($\mathbf{PSPACE}$-complete)
of verification tasks \cite{esparza2019parameterized}.
Our results hold under any definition of fairness
        satisfying two general assumptions
        (see Definition~\ref{definition:execution}),
including all the usually used versions of fairness.

We prove it by observing a general structural property
shared by all  protocols with unreliable communication.
Informally speaking, protocols with unreliable communication have some
special fair executions which can be extended by adding
an additional agent with the same initial and final state
as a chosen existing one.
This property is similar to the copycat arguments
used, for example, for proving the exact expressive power
of immediate observation protocols.
The usual structure of the copycat arguments
includes a proof that we can pick an agent in an execution
and add another agent (copycat) which will repeat all the
state transitions of the chosen one.
In the immediate observation case the corresponding
property is almost self-evident once defined.
A slightly stronger but still straightforward argument is needed
in the case of reconfigurable broadcast networks \cite{BertrandBM19}.
The latter model is equivalent to unreliable broadcast networks;
a sender broadcasts a message and changes the local state,
and an arbitrary set of receivers react to the message (immediately).
However,  unlike all the previous uses of the copycat-like arguments
in the context of population protocols and similar models,
proving the necessary copycat-like property
for a general notion of  protocols with unreliable communication
(sufficient to handle assymetry of message loss where loss for
sender requires loss for receiver)
requires careful analysis using different techniques.

Note that although the natural way to design population protocols
for our setting involves the use of immediate observation population protocols,
we still need to rule out additional opportunities arising
from the fact that eventually a two-agent interaction
with both agents correctly updated will happen.
However, in contrast to self-stabilising protocols 
\cite{DBLP:journals/cacm/Dijkstra74,journals/taas/AngluinAFJ08},
the protocols cannot rely on the message loss being absent
for an arbitrarily long time.

Surprisingly, asynchronous transmission and receipt of messages,
which provides more expressive power
than immediate observation population protocols
in the reliable setting, turns out to have strictly less
expressive power in the unreliable setting.
Note that message reordering is allowed already in the reliable setting,
while unreliability is essentially a generalisation of message loss.
One could say that an unbounded delay in message delivery 
becomes a liability instead of an asset once there is message loss.

The rest of the present paper is organised as follows.
First, in Section~\ref{sec:basic-def} we define a general protocol framework generalising
many previously studied approaches.
Then in Section~\ref{sec:known-expressive-power} we summarise the results from the literature on
the expressive power of various models covered by this framework.
Afterwards in Section~\ref{sec:models} we formally define our general notion of
a protocol with unreliable communication.
Then in Section~\ref{sec:main-proof} we formalise the common limitation of all the  protocols with unreliable communication,
and provide the proof sketches of this restriction and the main result.
Afterwards in Section~\ref{sec:message-based} we show that fully asynchronous (message-based) models
become strictly less powerful
than immediate observation in the unreliable setting.
The paper ends with a brief conclusion and some possible future directions.


\subsection{Main results (preview)}

The precise statements of our results require the detailed definitions introduced
later. However, we can roughly summarise them as follows.

First, we characterise the expressive power of all fixed-memory protocols given unreliable comunication.

\begin{proposition}
Adding unreliability of communication
to population protocols restricts the predicates they can express
to boolean combinations of comparisons of arguments with constants.
\end{proposition}

This is the same expressive power as the immediate observation protocols.

Next we show that unreliability changes the expressive power non-monotonically for some natural classes.

\begin{proposition}
Queued transmission protocols with unreliable communication are strictly less expressive
        than immediate observation population protocols (with or without unreliable communication).
\end{proposition}

Note that without unreliability 
queued transmission protocols are strictly more expressive than immediate observation population protocols.

\section{Basic definitions}
\label{sec:basic-def}

\subsection{Protocols}

We consider various models of distributed computation where
the number of agents is constant during protocol execution,
each agent has a constant amount of local storage,
and agents cannot distinguish each other except via the states.
We provide a general framework for describing such protocols.
Note that we omit some very natural restrictions
(such as decidability of correctness of a finite execution)
because they are irrelevant for the problems we study.
We allow agents
to be distinguished and tracked individually
for the purposes of analysis,
even though they cannot identify each other
during the execution of the protocol.

We will use the following problem to illustrate our definitions:
the agents have states $q_0$ and $q_1$ corresponding to input
symbols $0$ and $1$ and aim to find out if all the agents
have the same input.
They have an additional state $q_\bot$
to represent the observation that both input symbols were present.
We will define four protocols for this problem
using different communication primitives.
\begin{itemize}
        \item Two agents interact and both switch to $q_\bot$
                unless they are in the same state
                (population protocol interaction).
        \item An agent observes another agent and switches
                to $q_\bot$ if they are in different states
                (immediate observation).
        \item An agent can send a message with its state,
                $q_0$, $q_1$ or $q_\bot$.
                An agent in a state $q_0$ or $q_1$
                can receive a message (any of the pending
                messages, regardless of order);
                the agent switches
                to $q_\bot$ if the message contains a state
                different from its own
                (queued transmission).
        \item An agent broadcasts its state without
                changing it;
                each other agent receives the broadcast
                simultaneouly and
                switches to $q_\bot$ if its state is
                different from the broadcast state
                (broadcast protocol interaction).
\end{itemize}

\begin{definition}
A \emph{protocol} is specified by a tuple
        $(Q,M,\Sigma,I,o,\mathrm{Tr},\Phi)$,
with components being
        a finite nonempty set $Q$ of (individual agent) \emph{states},
        a finite (possibly empty) set $M$ of \emph{messages},
        a finite nonempty \emph{input alphabet} $\Sigma$,
        an \emph{input mapping function} $I: \Sigma\to Q$,
        an \emph{individual output function} $o: Q\to \{true,false\}$,
        a \emph{transition} relation $\mathrm{Tr}$ (which is described in more details below),
        and a \emph{fairness condition} $\Phi$ on executions.
\end{definition}

The protocol defines evolution of populations of agents
(possibly with some message packets being present).

\begin{definition}
        A \emph{population} is a pair of sets:
        $A$ of \emph{agents} and $P$ of \emph{packets}.
        A \emph{configuration} $C$ is a population together
        with two functions, $C_A: A\to Q$ provides agent states,
        and $C_P: P\to M$ provides packet contents.
\end{definition}

        Note that if $M$ is empty, then $P$ must also be empty.
        As the set of agents is the domain of the function $C_A$,
        we use the notation $\mathrm{Dom}(C_A)$ for it.
        The same goes for the set of packets $\mathrm{Dom}(C_P)$.
        Without loss of generality $\mathrm{Dom}(C_P)$
        is a subset of a fixed countable set of possible packets.

        The message packets are only used for asynchronous communication;
        instant interaction between agents
        (such as in the classical rendezvous-based population protocols
        or in broadcast protocols)
        does not require describing the details of communication 
        in the configurations.

\begin{example}
        The four example protocols have
        the same set of states $Q=\{q_0,q_1,q_\bot\}$.
        The first two protocols have the empty set of messages,
        and the last two have the set of messages $M=\{m_0,m_1,m_\bot\}$.
The example protocols all have the same
        input alphabet $\Sigma=\{0,1\}$,
        input mapping $I: i\mapsto q_i$,
        and output mapping
        $o: q_0\mapsto \mathrm{true}, q_1\mapsto \mathrm{true}, q_\bot\mapsto \mathrm{false}$.
\end{example}

The definition of the transition relation uses the following notation.

\begin{definition}
        For a function $f$ and $x\notin \mathrm{Dom}(f)$ let $f\cup\{x\mapsto y\}$
        denote the function $g$ defined on $\mathrm{Dom}(f)\cup \{x\}$ such that
        $g\mid_{\mathrm{Dom}(f)}=f$ and $g(x)=y$.
        For $u\in \mathrm{Dom}(f)$ let $f[u\mapsto v]$ denote
        the function $h$ defined on $\mathrm{Dom}(f)$ such that
        $h\mid_{\mathrm{Dom}(f)\setminus\{u\}}=f\mid_{Dom(f)\setminus\{u\}}$ and $h(u)=v$.
        For symmetry, if $w=f(u)$
        let $f\setminus\{u\mapsto w\}$ denote
        restriction $f\mid_{\mathrm{Dom}(f)\setminus\{u\}}$.

        Use of this notation implies an assertion of correctness,
        i.e. $x\notin \mathrm{Dom}(f)$, $u\in \mathrm{Dom}(f)$, and $w=f(u)$.
        We use the same notation with a configuration $C$ instead of a function
        if it is clear from context whether $C_A$ or $C_P$ is modified.
\end{definition}

Now we can describe the transition relation that tells us which
configurations can be obtained from a given one
via a single interaction.
In order to cover broadcast protocols
we define the transition relation as a relation on configurations.
The restrictions on the transition relation
ensure that the protocol behaves like
a distributed system
with arbitrarily large number of anonymous agents.

\begin{definition} The  \emph{transition} relation of a protocol is a set of triples $(C,A^\odot,C')$, called  \emph{transitions}, where $C$ and $C'$ are configurations and  $A^\odot \subset \mathrm{Dom}(C_A)$ is the set of \emph{active agents} (of the transition); agents in $\mathrm{Dom}( C_A)\setminus A^\odot$, are called \emph{passive}.
        We write $C\xrightarrow{A^\odot} C'$ for $(C, A^\odot, C')\in \mathrm{Tr}$,
        and let $C\to C'$ denote the projection
        of $\mathrm{Tr}$: $C\to{}C'\Leftrightarrow{}\exists{A^\odot}:C\xrightarrow{A^\odot}C'$.
The transition relation must satisfy the following conditions for every  transition $C\xrightarrow{A^\odot} C'$:
\begin{itemize}
        \item \textbf{Agent conservation}. $\mathrm{Dom}( C_A) = \mathrm{Dom}( C'_A)$.
        \item \textbf{Agent and packet anonymity}. If $h_A$ and $h_P$ are bijections
                such that $D_A = C_A \circ h_A$, $D'_A = C'_A \circ h_A$,
                $D_P= C_P \circ h_P $, and $D'_P = C'_P \circ h_P $,
                then $D\xrightarrow{h^{-1}(A^\odot)}D'$.
        \item \textbf{Possibility to ignore extra packets}. For every $p \notin \mathrm{Dom}( C_P) \cup\mathrm{Dom}( C'_P)$ and $m \in M$:
               $C \cup \{p \mapsto m\} \xrightarrow{A^\odot} C' \cup \{p \mapsto m\}$.
       \item \textbf{Possibility to add passive agents}. For every agent $a \notin \mathrm{Dom}( C_A)$ and  $q \in Q$ there
                 exists $q' \in Q$ such that: $C \cup \{a \mapsto q\} \xrightarrow{A^\odot} C' \cup \{a \mapsto q'\}$.
\end{itemize}
\end{definition}

Informally speaking, the active agents are the agents that transmit something
during the interaction.
The passive agents can still observe other agents and change their state.
The choice of active agents
is used for the definition of  protocols with unreliable communication, as a failure
to transmit precludes success of reception.
The formal interpretation will be provided in Definition~\ref{def:unreliable-protocol}.

Many models studied in the literature have the transition relation defined
using pairwise interaction. In these models the transitions are always
changing the states of two agents based on their previous states.
When discussing such protocols, we will use the notation
$(p,q)\to(p',q')$
for a transition where agents in the states $p$ and $q$ switch
to states $p'$ and $q'$, correspondingly.

\begin{example}
The four example protocols have the following transition relations.
\begin{itemize}
        \item In the first protocol for a configuration $C$
                and two agents $a,a'\in \mathrm{Dom}(C_A)$ such that
                $C_A(a)\ne C_A(a')$
                we have
                $C\xrightarrow{\{a,a'\}}
                C[a\mapsto q_\bot][a'\mapsto q_\bot]$
                (in other notation,
                $(C,\{a,a'\},
                C[a\mapsto q_\bot][a'\mapsto q_\bot])\in \mathrm{Tr}$).
        \item In the second protocol for a configuration $C$
                and two agents $a,a'\in \mathrm{Dom}(C_A)$ such that
                $C_A(a)\ne C_A(a')$
                we have
                $C\xrightarrow{\{a\}}
                C[a\mapsto q_\bot]$.
                We can say that $a$  observes $a'$
                in a different state and switches to $q_\bot$.
        \item In the third protocol there are two types of transitions.
                Let a configuration $C$ be fixed.
                For an agent $a\in \mathrm{Dom}( C_A)$,
                $i\in\{0,1,\bot\}$ such that $C_A(a)=q_i$,
                and a new message identity $p\notin \mathrm{Dom}( C_P)$
                we have $C\xrightarrow{\{a\}} C\cup\{p\mapsto m_i\}$
                (sending a message).
                If $C_A(a)=q_i$ for some $i\in\{0,1\}$,
                for each message $p\in \mathrm{Dom}(C_P)$,
                we also have
                $C \xrightarrow{\{a\}}
                C[a\mapsto q']\setminus\{p\mapsto C_P(p)\}$
                where $q'$ is equal to $q_i$ if $C_P(p)=m_i$
                and $q_\bot$ otherwise
                (receiving a message).
        \item In the fourth protocol,
                for a configuration $C$
                and an agent $a\in \mathrm{Dom}( C_A)$
                we can construct $C'$
                by
                replacing $C_A$ with $C'_A$ that maps
                each $a'\in \mathrm{Dom}( C_A)$ to $C_A(a')$ if $C_A(a)=C_A(a')$
                and $q_\bot$ otherwise.
                Then we have
                $C\xrightarrow{\{a\}}C'$.
                We can say that $a$ broadcasts its state
                and all the agents in the other states
                switch to $q_\bot$.
\end{itemize}
\end{example}

\subsection{Definitions of the protocol classes studied in the literature}

Many previously studied models can be defined inside out framework.
Among such models are population protocols, 
immediate transmission population protocols,
immediate observation population protocols,
queued transmission protocols, broadcast protocols.
These general definitions are similar to the definitions
for specific protocols provided as exampled,
and our results do not depend on these definitions.
We provide them as a corroboration of sufficient 
generality of our framework.

First we translate the initial definition of the population protocols \cite{conf/podc/AngluinADFP04}.

\begin{definition}
        A \emph{population protocol}
        is described by an interaction relation $\delta\subseteq Q^2\times Q^2$.
        The set of messages is empty.
        A configuration $C'$ can be obtained from $C$,
        if there are agents $a_1, a_2\in \mathrm{Dom}( C_A)$
        and states $q_1, q_2, q_3, q_4\in Q$
        such that $C_A(a_1)=q_1$, $C_A(a_2)=q_2$,
        $C'=C[a_1\mapsto q_3][a_2\mapsto q_4]$,
        and $((q_1,q_2),(q_3,q_4))\in \delta$.
        The set of active agents $A^\odot$ is $\{a_1,a_2\}$.
\end{definition}

Now we proceed to the variants of the population protocols
appearing in the paper on expressive power of population protocols
and their variants \cite{journals/dc/AngluinAER07}.

\begin{definition}
        An \emph{immediate transmission population protocol}
        is a population protocol such that
        $q_3$ depends only on $q_1$,
        i.e. the following two conditions hold.
        If
        $((q_1,q_2),(q_3,q_4))\in \delta$
        and
        $((q_1,q'_2),(q'_3,q'_4))\in \delta$
        then $q_3=q'_3$.
        If
        $((q_1,q_2),(q_3,q_4))\in \delta$
        then for every $q'_2$ there exists $q'_4$ such that
        $((q_1,q'_2),(q_3,q'_4))\in \delta$.
\end{definition}

\begin{definition}
        An \emph{immediate observation population protocol}
        is an immediate transmission population protocol such that
        every possible interaction
        $((q_1,q_2),(q_3,q_4))\in \delta$
        has $q_1 = q_3$.

        We can consider only the first agent to be active.
\end{definition}

\begin{definition}
        \emph{Queued transmission protocol}
        has a nonempty set $M$ of messages.
        It has two transition relations:
        $\delta_s\subseteq Q\times (Q\times M)$
        describing sending the messages,
        and
        $\delta_r\subseteq (Q\times M)\times Q$
        describing receiving the messages.
        If agent $a$ has state $q=C_A(a)$ and
        $(q,(q',m))\in\delta_s$,
        it can
        send a message $m$ as a fresh packet $p$ and switch to state $q'$:
        $C\xrightarrow{\{a\}}C[a\mapsto q']\cup\{p\mapsto m\}$.
        If agent $a$ has state $q=C_A(a)$,
        packet $p$ contains message $m=C_P(p)$ and
        $((q,m),q'))\in\delta_r$,
        agent $a$ can
        receive the message:
        $C\xrightarrow{\{a\}}C[a\mapsto q']\setminus\{p\mapsto m\}$.

        \emph{Delayed transmission protocol}
        is a queued transmission protocol where every message
        can always be received by every agent,
        i.e. the projection of $\delta_r$
        to $Q\times M$ is the entire $Q\times M$.

        \emph{Delayed observation protocol}
        is a delayed transmission protocol where sending a message
        doesn't change state, i.e. $(q,(q',m))\in\delta_s$
        implies $q=q'$.
\end{definition}

As the last example, we consider broadcast protocols \cite{conf/lics/EmersonN98}.

\begin{definition}
        \emph{Broadcast protocol}
        is defined by two relations:
        $\delta_s\subseteq Q\times Q$ describing a sender transition,
        and $\delta_r\subseteq (Q\times Q)\times Q$.
        To perform a transition from a configuration $C$, we pick
        an agent $a\in{}\mathrm{Dom}(C_A)$ with state $q$
        and change its state to $q'$ such that $(q,q')\in\delta_s$.
        At the same time, we simultaneously update the state of all
        other agents, in such a way that an agent
        in state $q_j$ can switch to any state $q'_j$
        such that $((q_j,q),q'_j)\in\delta_r$.

        We consider the transmitting agent to be the only active one.
\end{definition}

\begin{remark}
In the literature,
the relations $\delta$, $\delta_s$, $\delta_r$ and $\delta_s$
are sometimes required to be partial functions.
As we use relations in the general case,
we use relations here for consistency.
\end{remark}

\subsection{Fair executions}

In this section we define the notion of fairness.
This notion is traditionally used to exclude the most pathological cases
without a complete probabilistic analysis of the model.
For the population protocols fairness has been a part of the definition
since the introduction \cite{conf/podc/AngluinADFP04,journals/dc/AngluinADFP06}.
However, in the general study of distributed computation
there has long been some interest in comparing effects of different
approaches to fairness in execution scheduling \cite{DBLP:journals/dc/AptFK88}.
For example, the distinction between weak fairness and strong fairness
and the conditions where one can be made to model the other has been
studied in \cite{DBLP:journals/tpds/Karaata01}.
The difference between weak and strong scheduling is that
strong fairness executes infinitely often every 
interaction that is enabled infinitely often,
while weak fairness only guarantees anything for continuously enabled 
interactions.
As there are multiple notion of fairness in use, we define their basic
common traits.
Our results hold for all notions of fairness satisfying these basic requirements,
including all the notions of fairness used in the literature,
as well as much stronger and much weaker fairness conditions.

\begin{definition}\label{definition:execution}
        An \emph{execution} is a sequence (finite or infinite) $C_n$
        of configurations such that at each moment $i$
        either nothing changes, i.e. $C_i=C_{i+1}$ or a single
        interaction occurs, i.e. $C_i\to C_{i+1}$.
        A configuration $C'$ is \emph{reachable} from configuration $C$
        if there exists an execution $C_0,\ldots,C_n$
        with $C_0=C$ and $C_n=C'$ (and \emph{unreachable} otherwise).

        A protocol defines a \emph{fairness condition} $\Phi$
        which is a predicate on executions.
        It should satisfy the following properties.
\begin{itemize}
        \item A fairness condition is \emph{eventual}, i.e. every
                finite execution can be continued to an infinite
                fair execution.
        \item A fairness condition ensures \emph{activity}, i.e.
                if an execution contains only configuration $C$ after
                some moment,
                only $C$ itself is reachable from $C$.
\end{itemize}
\end{definition}

\begin{definition}
        The \emph{default fairness condition}
        accepts an execution
        if every configuration either becomes unreachable
        after some moment, or occurs infinitely many times.
\end{definition}

\begin{example}
The example protocols use the default fairness condition.
\end{example}

It is clear that the default fairness condition ensures activity.

\begin{lemma}[adapted from
        \cite{journals/dc/AngluinAER07}]
        \label{lm:defaultEventual}
        Default fairness condition is eventual.
\end{lemma}

\begin{proof}
Consider a configuration after a finite execution.
Then there is a countable set of possible configurations
(note that the set of potential packets is at most countable).
Consider an arbitrary enumeration of configurations that mentions
each configuration infinitely many times.

We repeat the following procedure:
skip unreachable configurations in the enumeration,
then perform the transitions necessary to reach the next reachable one.
If we skip a configuration, it can never become reachable again.
Therefore all the configurations that stay reachable infinitely long
are never skipped and therefore they are reached infinitely many times.
\end{proof}

The fairness condition is sometimes said to be an approximation
of probabilistic behaviour.
In our general model
the default fairness condition
provides executions similar
to random ones
for protocols without messages but not always for protocols
with messages.
The arguments from \cite{journals/corr/abs-1912-06578} with minimal modification
prove this.
The core idea in the case without messages is observing 
we have a finite state space 
reachable from any given configuration; a random walk eventually
gets trapped in some strongly connected component, visiting all of its 
states infinitely many time.
If we do have messages, the message count might behave like a biased random walk;
while consuming all the messages stays possible in principle,
with probability one it only happens a finite number of times.

\subsection{Functions implemented by protocols}

In this section we recall the standard notion of a function evaluated by a protocol.
Here the standard definition generalises trivially.

\begin{definition}
        An \emph{input configuration} is a configuration where
        there are no packets and all agents are in input states,
        i.e. $P=\varnothing$ and $\mathrm{Im}(C_A) \subseteq \mathrm{Im}(I)$
        where $\mathrm{Im}$ denotes the image of a function.
        We extend $I$ to be applicable to multisets of input symbols.
        For every $\overline{x}\in\mathbb{N}^\Sigma$,
        we define $I(\overline{x})$ to be a configuration
        of $|\overline{x}|$ agents with $\sum_{I(\sigma)=q_i}\overline{x}(\sigma)$
        agents in input state $q_i$ (and no packets).

        A configuration $C$ is a \emph{consensus} if
        the individual output function
        yields the same value for the states of all agents,
        i.e. $\forall a,a'\in \mathrm{Dom}(C_A): o(C_A(a))=o(C_A(a'))$.
        This value is the output value for the configuration.
        $C$ is a \emph{stable consensus}
        if all configurations reachable from $C$
        are consensus configurations
        with the same value.

        A protocol \emph{implements}
        a predicate $\varphi: \mathbb{N}^\Sigma\to \{true,false\}$
        if for every $\overline{x}\in\mathbb{N}^\Sigma$
        every fair execution starting from $I(\overline{x})$
        reaches a stable consensus
        with the output value $\varphi(\overline{x})$.
        A protocol is \emph{well-specified} if it implements
        some predicate.
\end{definition}

\begin{example}
        It is easy to see that each of the four example protocols implements
        the predicate $\varphi(\overline{x})\Leftrightarrow(x(0)=0)\vee(x(1)=0)$
        on $\mathbb{N}^2$.
        In other words, the protocol accepts the input configurations
        where one of the two input states has zero agents and rejects
        the configurations where both input states occur.
\end{example}

This framework is general enough to define the models studied in the literature,
such as
population protocols, immediate transmission protocols,
immediate observation population protocols,
delayed transmission protocols,
delayed observation protocols,
queued transmission protocols,
and
broadcast protocols.

\section{Expressive power of population protocols and related models}
\label{sec:known-expressive-power}

In this section we give an overview of previously known results
on expressive power of various models related to population protocols.
We only consider predicates, i.e. functions with
the output values being $\mathrm{true}$ and $\mathrm{false}$ because
the statements of the theorems become more straightforward in that case.

The expressive power of models related to
population protocols is
expressed in terms of
semilinear, $\mathbf{coreMOD}$,
and counting predicates.
Semilinear predicates on tuples of natural numbers can be
expressed using the addition function, remainders modulo constants,
and the order relation,
such as $x+x \geq{} y+3$ or $x\mod 7=3$.
Roughly speaking,
$\mathbf{coreMOD}$ is the class of predicates that become
equivalent to modular equality for inputs with only
large and zero components.
An example could be $(z=1 \wedge x\geq{}y) \vee (x+y \mod 2 = 0)$,
a semilinear predicate which becomes a modular equality
whenever $z=0$ or $z$ is large (i.e. $z\geq{}2$).
Counting predicates are logical combinations
of inequalities including one coordinate and one constant each,
for example, $x\geq{3}$.

\begin{theorem}[see \cite{journals/dc/AngluinAER07} for details]
Population protocols and queued transmission protocols
can implement precisely semilinear predicates.

Immediate transmission population protocols
and delayed transmission protocols can implement
precisely all the semilinear predicates that are also in $\mathbf{coreMOD}$.

Immediate observation population protocols
implement  counting predicates.

Delayed observation protocols implement the counting
predicates where every constant is equal to $1$.
\end{theorem}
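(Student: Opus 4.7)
The plan is to establish each of the four claims as two inclusions: every implementable predicate lies in the stated class (upper bound), and every predicate in that class is implementable (lower bound). I would proceed from the most expressive classes (population protocols, queued transmission) to the more restricted ones, following the overall strategy of \cite{journals/dc/AngluinAER07}, so that the lower-bound constructions can be reused across cases and the upper bounds grow sharper as the model weakens.

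For the lower bounds in the population-protocol and queued-transmission cases, it suffices to exhibit protocols for threshold predicates $x_\sigma \geq c$ and modular predicates $x_\sigma \equiv c \pmod m$, and then to close under boolean combinations by running sub-protocols on disjoint state components and combining their individual outputs via a product construction that delays the consensus output until every sub-protocol has stabilised. The two-way equivalence between population protocols and queued transmission is obtained by simulating each pairwise interaction by a send immediately followed by a matching receive, and conversely simulating message packets by ``carrier'' agents in auxiliary states; semilinearity of the implementable class then transfers automatically in both directions.

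The upper bound for these two classes is the main obstacle. The route I would take is to view configurations as vectors in $\mathbb{N}^Q$ (with extra coordinates per message type in the queued case) and the reachability relation as that of a vector addition system. One then shows that the set of inputs whose fair executions all stabilise to a $\mathrm{true}$-consensus is Presburger-definable, by analysing the shape of reachable stable configurations and invoking the semilinearity of the reachability sets of appropriately restricted Petri nets. Since Presburger-definable sets of tuples of naturals are precisely the semilinear ones, this yields the required upper bound.

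For the transmission classes (immediate and delayed), the restriction to $\mathbf{coreMOD}$ arises from a copycat-style argument: whenever a ``sender'' agent's new state depends only on its own prior state, as in immediate transmission, or on the received message alone, as in delayed transmission, one can insert duplicates of existing agents without changing the behaviour of the rest of the population, so for inputs where all relevant coordinates are large the protocol's output can only depend on residues modulo a fixed $m$. For the observation classes the copycat argument strengthens further: in immediate observation a passive agent does not change state at all, so the protocol can only compare counts of individual states against fixed thresholds, yielding precisely the counting predicates; in delayed observation the sender also keeps its state, which forbids accumulating any count beyond $1$ in a single state and so restricts the allowed thresholds to the constant $1$. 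The corresponding lower bounds are then direct constructions of threshold and combined counting protocols. The hardest step in each of these refinements is sharpening the copycat lemma enough to rule out every predicate outside the claimed class while still leaving room for the explicit constructions that realise every predicate inside it.
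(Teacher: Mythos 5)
The paper itself does not prove this theorem: it is stated as background with a pointer to \cite{journals/dc/AngluinAER07}, so your sketch has to be judged against the arguments in that literature, and two of your steps have genuine gaps. The first is the upper bound for population protocols and queued transmission. You propose to get semilinearity by ``invoking the semilinearity of the reachability sets of appropriately restricted Petri nets''. Reachability sets of vector addition systems are not semilinear in general (Hopcroft--Pansiot), and you do not identify any restriction corresponding to population protocols that would make them so; this inclusion is the deep part of the theorem, and the known proof (Angluin, Aspnes, Eisenstat) is a pumping/truncation analysis of \emph{stable output} configurations, using Dickson/Higman-style well-quasi-order arguments to show that the sets of inputs stabilising to $\mathrm{true}$ and to $\mathrm{false}$ are each semilinear; it does not route through semilinearity of reachability sets. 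As written, this step does not go through.

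The second gap is the $\mathbf{coreMOD}$ upper bound for immediate and delayed transmission. You claim that since the sender's new state depends only on its own prior state, ``one can insert duplicates of existing agents without changing the behaviour of the rest of the population''. That is exactly what fails in transmission models: a transmission must be received by somebody, and the receiver's update depends on the sender's state, so a duplicated sender forces extra receptions and does perturb the rest of the population. This failure of the naive copycat argument is precisely why immediate transmission is strictly more expressive than immediate observation, and it is the same obstacle that forces the careful shadow-agent construction in the present paper (section~\ref{sec:shadow-agent}). Moreover, if your insertion argument were sound it would show the output is invariant under adding a \emph{single} agent to any large coordinate, which rules out $x \bmod 7 = 3$ and contradicts the lower bound you also assert; the correct invariance is under adding $m$ agents for a suitable modulus $m$, and its proof in \cite{journals/dc/AngluinAER07} is substantially more involved. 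Your lower-bound constructions and the observation-model characterisations are right in outline, though the delayed-observation restriction to constants equal to $1$ really comes from a single agent being able to emit arbitrarily many identical messages (so multiplicities above one are indistinguishable), not from any prohibition on ``accumulating counts in a state''.
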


\begin{theorem}[see \cite{blondin2019expressive} for details]
Broadcast protocols implement precisely the predicates
computable in nondeterministic linear space.
\end{theorem}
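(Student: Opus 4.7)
The plan is to establish both inclusions between the class of predicates implementable by broadcast protocols and the class $\mathbf{NSPACE}(n)$, where $n$ measures the size of the input multiset (say, as the total count $|\overline{x}|$ of input agents).

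For the upper bound, I would observe that the set of configurations reachable from an input of size $n$ lies inside $\{0,1,\dots,n\}^{|Q|}$, whose cardinality is polynomial in $n$ since $|Q|$ is fixed. A single broadcast step is locally verifiable in time polynomial in $n$: one guesses the broadcasting state and checks that every agent's state update conforms to $\delta_s$ or $\delta_r$. Hence configuration reachability reduces to graph reachability on a polynomial-size graph, which is in $\mathbf{NL}$ relative to the configuration encoding and therefore in $\mathbf{NSPACE}(n)$ relative to the unary input. To decide whether the protocol stably outputs a given value on $I(\overline{x})$, I would nondeterministically guess a reachable configuration $C$, verify that $C$ is a consensus of the intended value, and then verify that no configuration reachable from $C$ is a consensus of the opposite value. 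The latter is a universal reachability query, which remains within nondeterministic linear space thanks to Immerman--Szelepcs\'enyi.

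For the lower bound, the task is to simulate an arbitrary nondeterministic linear-space Turing machine $M$ by a broadcast protocol acting on a population of size proportional to the machine's tape length. My plan is to exploit the defining feature of broadcast communication: a single broadcast touches every agent simultaneously, so global operations such as leader election, resets, and distributed Boolean scans are implementable with a constant number of states. I would designate one agent to play the role of the control unit of $M$, while the remaining agents collectively encode the tape. To address individual tape cells despite agent anonymity, I would maintain a distinguished ``cursor'' marker that is advanced along an implicit chain of agents by carefully choreographed broadcast rounds, with each transition of $M$ realised by a short sequence of broadcasts that reads, writes, and shifts the cursor.

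The main obstacle will be the anonymity constraint in the lower bound construction: since agents cannot refer to each other by identity, the tape chain has to be encoded implicitly through protocol state, and one must verify both that the simulation makes progress under any fair schedule and that no fair execution can wander into a spurious stable consensus from which the correct answer cannot be recovered. A secondary obstacle is aligning the protocol-level output condition (eventual stable consensus under the fairness condition $\Phi$) with acceptance of $M$; the standard remedy is to have the halting state of $M$ trigger a global broadcast that drives every agent into a common output state, together with an auxiliary mechanism that restarts the simulation if fairness forces the cursor machinery out of sync before halting.
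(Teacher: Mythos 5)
The paper does not actually prove this statement: it is quoted from the cited reference (Blondin et al.), so there is no internal proof to compare against, and your proposal must be judged on its own merits. There it has a genuine gap, and it starts with the reading of ``nondeterministic linear space''. Your own upper-bound argument exposes the problem: with $n$ anonymous agents a broadcast protocol has at most $(n+1)^{|Q|}$ distinct configurations, so guess-and-verify reachability together with Immerman--Szelepcs\'enyi decides the stably computed predicate in nondeterministic space $O(\log n)$ measured in the number of agents, i.e.\ space linear in the length of the \emph{binary} encoding of the input tuple. That is the intended meaning of the theorem. Under your ``linear space relative to the unary input'' reading, the claimed equality would contradict your own upper bound by the space hierarchy theorem, since $\mathrm{NSPACE}(\log n)\subsetneq\mathrm{NSPACE}(n)$; so the statement you set out to prove in the lower-bound direction is not the one the theorem asserts, and in fact is false.

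Consequently the lower-bound plan cannot be repaired as stated. You propose to simulate a Turing machine whose tape length is proportional to the population size, encoding the tape as an ordered ``chain'' of agents traversed by a cursor. Anonymity makes this impossible rather than merely delicate: agent anonymity is built into the step relation, so agents in the same state are interchangeable and the only globally meaningful information is the multiset of states, about $|Q|\log n$ bits --- far too little to hold an arbitrary $\Theta(n)$-bit tape, no matter how the broadcast rounds are choreographed. The correct construction goes the other way around: numbers are stored in unary as the \emph{counts} of agents in designated states, and one simulates nondeterministic counter machines with counters polynomially bounded in $n$ --- equivalently $O(\log n)$-space machines, i.e.\ linear space in the binary-encoded input --- using broadcasts for leader election, zero tests, resets, and restarts of the simulation, with halting converted into a stable consensus. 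Your remarks about restart mechanisms and aligning acceptance with eventual stable consensus are in the right spirit, but they rest on a tape encoding that cannot exist.
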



\section{Our models}
\label{sec:models}

\subsection{Proposed models}

We propose a general notion of an unreliable communication version
of a protocol.
Our notion models transient failures,
so the set of agents is preserved.
The intuition we formalise is the idea that
for every possible transition
some agents may fail to update their states
(and keep their corresponding old states).
We also require
that for some passive agent to receive a transmission,
the transmission has to occur
(and active agents who transmit
do not update their state if they fail to transmit,
although a successful transmission can still fail to be received).

\begin{definition}
        \label{def:unreliable-protocol}
        A \emph{protocol with unreliable communication},
        corresponding to a protocol $\mathcal{P}$,
        is a protocol that differs from $\mathcal{P}$ only in the transition relation.
        For every allowed transition $C\xrightarrow{A^\odot} C'$ we also allow
        all the transitions $C\xrightarrow{A^\odot} C''$ where
        $C''$ satisfies the following
        conditions.
\begin{itemize}
        \item \textbf{Population preservation}.
                $\mathrm{Dom}(C''_A)=\mathrm{Dom}(C'_A)$, $\mathrm{Dom}(C''_P)=\mathrm{Dom}(C'_P)$.
        \item \textbf{State preservation}.
                For every agent $a\in \mathrm{Dom}(C''_A)$: $C''_A(a)\in\{C_A(a), C'_A(a)\}$.
        \item \textbf{Message preservation}.
                For every packet $p\in \mathrm{Dom}(C''_P)$: $C''_P(p)=C'_P(p)$.
        \item \textbf{Reliance on active agents}.
                Either for every agent $a\notin A^\odot$ we have $C''_A(a) = C_A(a)$,
                or for every agent $a\in A^\odot$ we have $C''_A(a) = C'_A(a)$.
\end{itemize}
\end{definition}

\begin{example}
\begin{itemize}
        \item
Population protocols with unreliable communication allow an interaction
to update the state of only one of the two agents.
        \item
Immediate transmission population protocols
with unreliable communication
allow the sender to update
the state with no receiving agents.
        \item
Immediate observation population protocols with
unreliable communication do not differ
from ordinary immediate observation population protocols,
because each transition changes the state of only one agent.
Failing to change the state means a no-change transition
which is already allowed anyway.
        \item
Queued transmission protocols with unreliable communication allow messages
to be discarded with no effect.
Note that for  delayed observation protocols unreliable communication doesn't
change much, as sending the messages also has no effect.
\item
 Broadcast protocols with unreliable communication allow a broadcast
to be received by an arbitrary subset of agents.
\end{itemize}
\end{example}

\subsection{The main result}

Our main result is that no class of  protocols with unreliable communication
can be more expressive than immediate observation protocols.

\begin{definition}
        A \emph{cube} is a subset of $\mathbb{N}^k$
        defined by a lower and upper  (possibly infinite) bound for
        each coordinate.
        A \emph{counting set} is a finite union of cubes.

        A \emph{counting predicate} is a membership predicate for some
        counting set.
        Alternatively, we can say it is a predicate that can be computed
        using comparisons of input values with constants and logical operations.

\end{definition}

\begin{theorem}\label{thm:implementable}
The set of predicates that can be implemented by
protocols with unreliable communication is the set of counting predicates.
        All counting predicates can be implemented by (unreliable)
        immediate observation protocols.
\end{theorem}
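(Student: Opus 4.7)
The plan is to establish the two set inclusions separately. The easy direction, that every counting predicate is implementable by some unreliable protocol, follows almost immediately from the material already set up: as observed in the example following Definition~\ref{def:unreliable-protocol}, the unreliable version of an immediate observation protocol coincides with the original (because every IO step modifies only one agent, and ``failing to update'' is indistinguishable from a no-op step, which the fairness condition already tolerates). The theorem of Angluin et al.\ cited earlier says IO protocols compute exactly the counting predicates, so each counting predicate has an unreliable implementation.

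The substantive direction is that every predicate $\varphi$ implemented by some unreliable protocol $\mathcal{P}$ is a counting predicate. The plan is to feed the shadow-agent property promised in the introduction (and proved in Section~\ref{sec:shadow-agent}) into a standard threshold/pumping argument. Informally, the shadow-agent lemma will provide, for a fair execution of $\mathcal{P}$ from $I(\overline{x})$ that stabilises to consensus value $v$, a corresponding fair execution of $\mathcal{P}$ from $I(\overline{x}+e_\sigma)$ in which the extra agent (the ``shadow'') starts in state $I(\sigma)$ and finishes in the same state as some existing agent of the original execution. Because $\mathcal{P}$ is well-specified, this new fair execution must itself reach a stable consensus, whose value equals $\varphi(\overline{x}+e_\sigma)$; but the shadow's final state has individual output $v=\varphi(\overline{x})$, which (after checking that the whole configuration is still a consensus) forces $\varphi(\overline{x}+e_\sigma)=\varphi(\overline{x})$.

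From this invariance, the counting characterisation is extracted by a pigeonhole argument on the state space $Q$: once $\overline{x}(\sigma)$ exceeds a threshold $N_\sigma$ depending only on $|Q|$ and $\mathcal{P}$, the hypotheses needed to apply the shadow-agent lemma can always be arranged (for some input symbol matching one of the abundant agents), so adding another $\sigma$-agent does not change $\varphi$. Iterating over all $\sigma\in\Sigma$ and thresholds yields that $\varphi$ is constant on each region of $\mathbb{N}^\Sigma$ cut out by coordinate-wise comparisons with the finitely many constants $\{0,1,\dots,N_\sigma\}$. That is exactly the definition of a counting predicate, completing the inclusion.

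The principal obstacle is the shadow-agent lemma itself, which is deferred to Section~\ref{sec:shadow-agent}. As the introduction emphasises, for immediate observation protocols the construction is essentially self-evident (the shadow simply re-enacts each observation its model makes), but for general unreliable protocols the shadow's presence alters what steps are possible---in particular, in broadcast-style models, the shadow receives broadcasts intended for others, which can disturb synchronisation with its model. The non-trivial content will be to use the ``reliance on active agents'' clause of Definition~\ref{def:unreliable-protocol}, together with the freedom of passive agents to fail to update, so as to construct the shadow's trajectory step-by-step while preserving fairness of the global execution. Once that lemma is in place, the deduction of Theorem~\ref{thm:implementable} above is comparatively routine.
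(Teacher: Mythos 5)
Your easy direction matches the paper, and your overall plan for the hard direction (shadow agents plus thresholds plus pigeonhole) is the right shape, but there is a genuine gap at the crucial step. After transplanting the shadow execution, you argue that the resulting configuration ``is still a consensus'' and that this, together with well-specification, forces $\varphi(\overline{x}+e_\sigma)=\varphi(\overline{x})$. That is not enough: the shadow execution is in general \emph{not} fair, so the configuration you reach is only known to be a consensus at one instant, not a stable one, and a fair continuation from a mere consensus can perfectly well leave it and converge to the opposite value. What the paper needs (and proves) at exactly this point is the truncation lemma: there is a constant $K$ such that adding an agent to a stable consensus in a state already occupied by at least $K$ agents again yields a stable consensus. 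That lemma is a separate Dickson's-lemma/upward-closedness argument about the multiset of non-stable-consensus configurations, and nothing in your proposal supplies it or a substitute.

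This omission also explains why your threshold bookkeeping is misplaced. The shadow-agent lemma is unconditional: every unreliable protocol is shadow-permitting, with no abundance hypothesis on the input, so there are no ``hypotheses needed to apply the shadow-agent lemma'' that a threshold $N_\sigma$ would secure. Read literally, your first argument would give $\varphi(\overline{x}+e_\sigma)=\varphi(\overline{x})$ for \emph{all} $\overline{x}$, i.e.\ that $\varphi$ is constant, which is false for genuine counting predicates. The threshold in the paper's proof is $|Q|\times K$ with $K$ the truncation constant: if the input coordinate exceeds $|Q|\times K$, the pigeonhole principle guarantees that more than $K$ of those agents share a final state in the stable consensus, the shadow (which ends co-located with its model) can then be absorbed without destroying \emph{stability} of the consensus, and only then does any fair continuation preserve the output value. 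So the decomposition you need is: shadow-permitting (Lemma in Section~\ref{sec:shadow-agent}) \emph{and} truncatable (the generalised truncation lemma), and your write-up is missing the second ingredient entirely.
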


\section{Proof of the main result}
\label{sec:main-proof}

Our main lemma  is generalises of the copycat lemma
normally applied to specific models such as
immediate observation protocols.
The idea is that for every initial configuration there is
a fair execution that can be extended to a possibly unfair
execution by adding a copy of a chosen agent.
In some special cases,
for example,  broadcast protocols with unreliable communication,
a simple proof can be given by
saying that if the original agent participates
in an interaction,
the copy should do the same just before the original
without anyone ever receiving the broadcasts from the copy.
The copycat arguments 
are usually applied to models where a similar proof suffices.
The situation  is more complex for models like immediate
transmission protocols with unreliable communication. 
As a message cannot be received without being sent,
the receiver cannot update its state if the sender doesn't.
We present an argument applicable in the general case.

\begin{definition}
Let $E$ be an arbitrary execution
of protocol~$P$ with initial configuration $C$.
        Let $a\in{}\mathrm{Dom}(C_A)$ be an agent in this execution.
        Let $a'\notin \mathrm{Dom}(C_A)$ be an agent,
        and $C'=C\cup\{a'\mapsto C_A(a)\}$.
A set $\mathfrak{E}_a$ of executions
starting in configuration $C'$
is a \emph{shadow extension} of the execution~$E$
around the agent~$a$ if the following conditions hold:
\begin{itemize}
        \item removing $a'$ from each configuration
                in any execution in $\mathfrak{E}_a$ yields $E$;
        \item for each moment during the execution, 
                there is
                a corresponding execution in $\mathfrak{E}_a$
                such that $a$ and $a'$ have the same state
                at that moment.
\end{itemize}
        The added agent $a'$ is a \emph{shadow agent},
        and elements of $\mathfrak{E}_a$ are
        \emph{shadow executions}.
A protocol~$P$ is \emph{shadow-permitting}
if for every configuration~$C$
there is a fair execution starting from $C$ that
        has a~shadow extension around each agent~$a\in{}\mathrm{Dom}(C_A)$.
\end{definition}

Note that the executions in $\mathfrak{E}_a$ might not be fair
even if $E$ is fair.

Not all  population protocols
are shadow-permitting.
For example, consider a protocol with one input state $q_0$,
additional states $q_+$ and $q_-$,
and
one transition $(q_0,q_0)\to(q_+,q_-)$.
As the number of agents in
the states $q_+$ and $q_-$ is always the same, 
one can't
add a single extra agent going from state
$q_0$ to state $q_+$.

\begin{lemma}
        \label{shadow-unreliable}
        All protocols with unreliable communication are shadow-permitting.
\end{lemma}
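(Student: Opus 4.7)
The plan is to construct, from any initial configuration $C$, a fair execution $E$ that contains a no-change step before every action step, and then build each shadow execution by combining agent anonymity with the ``all passive agents stay'' option of the unreliable step relation.

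First, I would take any fair execution $\widehat{E}=\widehat{C}_0,\widehat{C}_1,\ldots$ from $C$ (guaranteed by eventuality applied to the one-configuration execution) and define $E$ by inserting a no-change repetition after each configuration: $E_{2k}=E_{2k+1}=\widehat{C}_k$. For the default fairness this preserves fairness, since $E$ visits the same configurations as $\widehat{E}$ with extra repetitions and no new infinite tails. For an arbitrary fairness satisfying only the eventuality and activity assumptions, $E$ is instead built incrementally: after inserting the next no-change slot I invoke eventuality to extend the resulting finite prefix to another fair execution, and iterate.

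Then, for each agent $a\in\mathrm{Dom}(C_A)$, I would define $\mathfrak{E}_a$ by giving, for every time $t$, a shadow execution $E_t$ starting at $C'=C\cup\{a'\mapsto C_A(a)\}$, projecting to $E$, and maintaining the invariant that the states of $a$ and $a'$ coincide at every index $\leq t$. The invariant is trivial at index $0$ and is preserved step by step by two kinds of shadow steps. The \emph{shadow update step} is used in a no-change slot $E_{2k}\to E_{2k+1}$ immediately preceding an action step $\widehat{C}_k\xrightarrow{A^\odot}\widehat{C}_{k+1}$ in which $a\in A^\odot$ moves from state $p$ to state $r$: since $a$ and $a'$ are both in state $p$ at that instant, the anonymity axiom applied to the bijection swapping $a$ and $a'$ yields a valid step $\widehat{C}_k\cup\{a'\mapsto p\}\xrightarrow{(A^\odot\setminus\{a\})\cup\{a'\}}D$ in which $a'$ moves to $r$ and $a$ takes $a'$'s former passive role. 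Choosing the ``all passive agents stay'' option of the unreliable relaxation, together with the individual choice letting every active agent other than $a'$ stay put, then produces the configuration $\widehat{C}_k\cup\{a'\mapsto r\}$, whose projection is $\widehat{C}_k$, exactly the no-change slot of $E$. The \emph{passive shadow step}, used in every other position, invokes the ``possibility to add passive agents'' axiom to insert $a'$ as a passive agent in the underlying step of $E$, and the same ``all passive agents stay'' relaxation then keeps $a'$'s state fixed while the active agents update normally, reproducing the projected step of $E$ exactly.

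The main obstacle is arranging $E$ to be fair under an arbitrary admissible fairness condition. For the default condition the doubling construction is manifestly fair, but the preservation argument for a fully general condition is delicate; the cleanest route is the stage-by-stage construction outlined above, repeatedly using eventuality after inserting each slot. Once $E$ is fixed, the remaining work is routine case analysis: combining anonymity with the unreliable relaxation for the shadow update step, and combining the add-passive axiom with the unreliable relaxation for the passive shadow step.
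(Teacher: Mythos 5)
Your construction maintains the invariant that $a$ and $a'$ are in the same state at every index, and the only mechanism you provide for $a'$ to change state is the ``shadow update step'', which applies only when $a$ is \emph{active} in the upcoming step. This misses exactly the case the lemma is hard for: a \emph{passive} agent $a$ whose state changes in a step (the receiver in unreliable immediate transmission, or an agent hit by a broadcast). There your passive shadow step freezes $a'$ while $a$ moves, and the invariant breaks with no way to repair it: to let $a'$ mimic the passive update inside a no-change slot, the ``reliance on active agents'' clause forces \emph{all} active agents (e.g.\ the sender) to update as well, which destroys the projection onto the stutter step; and inside the real step only one of $a,a'$ can play the receiver's role, since state preservation only allows each agent the choice between its old state and its state in the one legitimate target configuration. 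The paper's proof is built around this obstacle: it keeps a whole family $\mathfrak{E}_a$ (only \emph{some} member need have $a,a'$ synchronized), tracks the set of $a$-reachable shadow states, and---crucially---\emph{modifies the main execution} $E$ whenever $a$ would move to a non-$a$-reachable state (making $a$ fail to update, with a case split on whether the offending agent is active or passive so that the ``reliance on active agents'' clause is respected), while a new shadow trajectory lets $a'$ perform that update instead. Your proposal never touches $E$ apart from inserting stutters, so it cannot be completed along these lines.

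Two further problems: (i) even when $a$ is active, performing the shadow update in a no-change slot is inconsistent with packets, because population and message preservation force the unreliable step to produce the packet content of the target configuration; if the step sends or consumes a packet, the projection of that slot is no longer the stutter step of $E$ (queued/delayed transmission). In the paper the shadow deviation happens at the same index as a genuine (modified) step of $E$ whose packets are updated identically, so projections match. (ii) Your fairness argument for arbitrary admissible fairness conditions does not close: you insert infinitely many stutter slots and invoke eventuality after each one, but the limit of such a process need not be fair---a condition such as ``default fairness and only finitely many stutter steps'' is eventual and activity-ensuring yet rejects your doubled execution. The paper sidesteps this by showing that the set of $a$-reachable states can only grow, hence only finitely many deviations from the chosen fair continuation occur, after which $E$ coincides with a fair execution and is therefore fair; your construction has no analogous termination argument.
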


The intuition behind the proof is the following.
We construct a fair execution together with the shadow executions
and keep track what states can be reached by the shadow agents.
The set of reachable states will not shrink, as the shadow agent
can always just fail to update.
If an agent $a$ tries to move from a state $q$ to a state $q'$
not reachable by the corresponding
shadow agent in any of the shadow executions,
we ``split'' the shadow execution reaching $q$:
one copy just stays in place, and in the other the shadow agent $a'$
takes the place of $a$ in the transition while $a$ keeps the old state.
In the main execution there is no $a'$ so $a$ participates in the
interaction but fails to update.
Afterwards we restart the process of building a fair execution.

\begin{proof}
        We construct an execution and the families $\mathfrak{E}_a$
        in parallel, then show that the resulting execution $E$ is fair.
        We say that  a state $q$ is \emph{$a$-reachable}
        after $k$ transitions,
        if there is an execution in $\mathfrak{E}_a$
        such that $a'$ has state $q$ after $k$ transitions.
        The goal of the construction is to ensure
        that
        the set of $a$-reachable states grows as $k$ increases
        and contains the state of $a$ after $k$ transitions.

        Consider an initial configuration $C$.
        We build the execution $E$ and its
        shadow extensions $\mathfrak{E}_a$
        for each $a\in\mathrm{Dom}(C_A)$
        step by step.
        Initially, $E=(C)$ and
        $\mathfrak{E}_a$ has exactly one execution,
        namely $(C\cup\{a'\mapsto{}C_A(a)\})$.
        We pick an arbitrary fair continuation $E^\infty$
        starting with $E$.

        At each step we extend $E=(E_0=C,E_1,\ldots,E_k)$
        by one configuration and
        update $\mathfrak{E}_a$ for each $a\in\mathrm{Dom}(C_A)$.
        Consider the next configuration in $E^\infty$,
        which we can denote $E^\infty_{k+1}$.
        By definition there exists a set of agents $A^\odot$
        such that
        $E^\infty_k\xrightarrow{A^\odot}E^\infty_{k+1}$.
        We consider the following cases.

        \textit{Case 1:}
        For each agent $a$
        the state $E^\infty_{k+1}(a)$
        is $a$-reachable (after $k$ transitions).

        We set $E_{k+1}=E^\infty_{k+1}(a)$
        and keep the same $E^\infty$.
        In other words, we just copy the next transition from $E^\infty$.
        Then for each
        agent $a\in\mathrm{Dom}(C_A)$
        and for each
        $E'_a\in\mathfrak{E}_a$
        we set
        $(E'_a)_{k+1}=E_{k+1}\cup\{a'\mapsto(E'_a)_k(a')\}$,
        i.e. say that $a'$ fails to update its state.

        \textit{Case 2:}
        For each active agent $a^\odot\in{}A^\odot$
        the state $E^\infty_{k+1}(a^\odot)$
        is $a^\odot$-reachable,
        but there is a passive agent $a\notin{}A^\odot$
        such that
        the state $E^\infty_{k+1}(a)$
        is not $a$-reachable (after $k$ transitions).

        We construct $E_{k+1}$ such that
        $E_{k+1}(a^\odot)=E^\infty_{k+1}(a^\odot)$
        for each active $a^\odot\in{}A^\odot$,
        and
        $E_{k+1}(a)=E_{k}(a)$ for each passive agent
        $a\in\mathrm{Dom}(C_A)\setminus{}A^\odot$.
        In other words, all the active agents perform the update,
        but all the passive agents fail to update.
        The message packets are still consumed or created
        as if we performed the transition
        $E_k=E^\infty_k\xrightarrow{A^\odot}E^\infty_{k+1}$,
        i.e. $(E_{k+1})_P = (E^\infty_{k+1})_P$.
        As $E^\infty$ is now not a continuation of $E$,
        we replace $E^\infty$ with an arbitrary
        fair continuation of our new $E$.
        Then for each
        $E'_a\in\mathfrak{E}_a$
        we set
        $(E'_a)_{k+1}=E_{k+1}\cup\{a'\mapsto(E'_a)_k(a')\}$
        like in the previous case.
        Also, for each
        passive agent $a\in\mathrm{Dom}(C_A)\setminus{}A^\odot$
        we add a trajectory
        $E''_a$ to $\mathfrak{E}_a$
        obtained by modifying an existing trajectory
        $E'_a\in\mathfrak{E}_a$
        such that $(E'_a)_k(a')=(E'_a)_k(a)$.
        We set $(E''_a)_{k+1}(a') = E^\infty_{k+1}(a)$,
        and keep everything else the same as in $E'_a$.
        In other words, we make $a'$ perform the update that
        $a$ would perform in $E^\infty$.

        \textit{Case 3:}
        There is an active agent $a\in{}A^\odot$ such that
        the state $E^\infty_{k+1}(a)$
        is not $a$-reachable (after $k$ transitions).

        We set $(E_{k+1})_A=(E_{k})_A$,
        i.e. we say that all the agents fail to update.
        The message packets are still consumed or created
        as if we performed the transition
        $E_k=E^\infty_k\xrightarrow{A^\odot}E^\infty_{k+1}$,
        i.e. $(E_{k+1})_P = (E^\infty_{k+1})_P$.
        As $E^\infty$ is now not a continuation of $E$,
        we replace $E^\infty$ with an arbitrary
        fair continuation of our new $E$.
        Then for each
        $E'_a\in\mathfrak{E}_a$
        we set
        $(E'_a)_{k+1}=E_{k+1}\cup\{a'\mapsto(E'_a)_k(a')\}$
        (like in the previous two cases).
        Also, for each
        active agent $a\in{}A^\odot$
        we add a trajectory
        $E''_a$ to $\mathfrak{E}_a$
        obtained by modifying an existing trajectory
        $E'_a\in\mathfrak{E}_a$
        such that $(E'_a)_k(a')=(E'_a)_k(a)$.
        We set $(E''_a)_{k+1}(a') = E^\infty_{k+1}(a)$,
        and keep everything else the same as in $E'_a$.
        In other words, we allow $a'$ to update its
        state in the way $a$ would do in $E^\infty$.

We now prove that the above construction is always correctly defined
and yields a fair execution $E$ together with shadow extensions
around each agent.

First we show that we always continue $E$ in a valid way, i.e.
$E_k\xrightarrow{A^\odot}E_{k+1}$.
In the first case it is true by construction
as $E_k=E^\infty_k$ and $E_{k+1}=E^\infty_{k+1}$.
In the second and the third case, we modify the states of some agents
in the second configuration of a valid transition
$E^\infty_k\xrightarrow{A^\odot}E^\infty_{k+1}$
by assigning them the states from the first configuration.
Such changes clearly cannot violate
population preservation and message preservation.
State preservation is satisfied because we replace the agent's
state in the second configuration with the state from the first configuration.
The case split between the cases~2 and~3 ensures reliance on active agents;
we either make sure that all the active agents update their state,
or none of them.
Therefore, all the conditions of the Definition~\ref{def:unreliable-protocol}
are satisfied and the changed transition is also present
in the protocol with unreliable communication.

As the updated execution $E$ is a valid finite execution,
we can find a fair continuation $E^\infty$ as the fairness condition is eventual.

When we extend the executions
in the shadow extensions
by repeating the same state,
 we just use
possibility to add passive agents to add $a'$ to the valid transition from $E$,
then observe that making a passive agent fail to update
is always allowed in an  protocol with unreliable communication.

When we add new trajectories in cases~2 and~3,
we use possibility to add passive agents to add $a'$ to the valid transition from~$E$,
then we use agent anonymity to swap the state changes of $a$ and $a'$,
then we use unreliability to make the (passive) agent $a$ fail to update the state,
as well as either all the passive or all the agents from $\mathrm{Dom}(C_A)$.

So far we know that the construction can be performed and yields
a valid execution $E$
and some valid executions in each $\mathfrak{E}_a$.
Now we check that each $\mathfrak{E}_a$ is a shadow extension around $a$,
and $E$ is fair.
We observe that our construction indeed only increases the set of $a$-reachable
states as the number of transitions grows.
Furthermore, at each step either agent $a$ moves to an $a$-reachable state,
or $a$ stays in an $a$-reachable state,
thus $\mathfrak{E}_a$ is indeed a shadow extension around the agent $a$.
Whenever the fair continuation $E^\infty$ is changed,
for at least one agent $a$ the set of $a$-reachable states
strictly increases.
As the set of agents is finite and cannot change by agent conservation,
and the set of states is finite,
all but a finite number of steps correspond to the case~1.
Therefore from some point on $E^\infty$ does not change
and $E$ coincides with it, and therefore $E$ is fair.

        This concludes the proof of the lemma.
\end{proof}

We also
use a straightforward generalisation of
the truncation lemma from \cite{journals/dc/AngluinAER07}.
The lemma says that all large \emph{amounts} of agents are equivalent
for the notion of stable consensus.

\begin{definition}
        A protocol is \emph{truncatable} if
there exists a number $K$ such that
for every stable consensus
        adding an extra agent with a state $q$
that is already represented by at least $K$ other agents
        yields a stable consensus.
\end{definition}


\vskip-3.5mm

\begin{lemma}[adapted from \cite{journals/dc/AngluinAER07}]
        \label{lemma:truncatable}
        All protocols (not necessarily with unreliable communication) are truncatable.
\end{lemma}

\begin{proof}
Every configuration can be summarised
        by an element of $\mathbb{N}^{Q\cup{}M}$
(each state is mapped to the number of agents in this state,
each message is mapped to the number of packets with this message).
In other words, we can forget the  identities and consider
the multiset of states and messages.
If a configuration is a consensus (correspondingly, stable consensus),
all the configurations with the same multiset of states
        and messages are also
        consensus configurations (correspondingly, stable consensus
        configurations).
        The set $\overline{ST}$ of elements of $\mathbb{N}^{Q\cup{}M}$
\emph{not} representing stable consensus configurations
        is upwards closed, because reaching a state with
a different local output value cannot be impeded by adding agents
or packets.
Indeed, if we can reach a configuration $C_{\overline{ST}}$
with some state $q$ present,
we can always use addition of passive agents to each transition of the path
and still have a path of valid transitions from a larger configuration to
some configuration $C^*_{\overline{ST}}$ with state $q$ still present.
By possibility to ignore extra packets, we can also allow
additional packets in the initial configuration.
By Dickson's lemma, the set $\overline{ST}$
of non-stable-consensus state multisets
has a finite set of minimal elements $\overline{ST}_{min}$.
We can take $K$ larger than all coordinates of all minimal elements.
Then adding more agents with the state that already has
at least $K$ agents
leads to increasing a component larger than $K$ in the multiset of states.
This cannot change any component-wise comparisons
with multisets from $\overline{ST}_{min}$,
and therefore belonging to $\overline{ST}$
and being or not a stable consensus.
\end{proof}

\begin{remark}
A specific bound on the truncation threshold $K$ can be obtained
using the Rackoff's bound for the size of configuration necessary for covering
in general Vector Addition Systems \cite{journals/tcs/Rackoff78}.
\end{remark}

\begin{lemma}
If a predicate $\varphi$ can be
implemented by a shadow-permitting truncatable protocol,
then $\varphi$ is a counting predicate.
\end{lemma}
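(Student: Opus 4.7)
The plan is to prove that $\varphi$ depends only on the tuple of truncated input counts $(\min(\overline{x}(\sigma), N))_{\sigma\in\Sigma}$ for a constant $N$ depending on $K$ and $|Q|$. Since this tuple takes only finitely many values and the preimage of each value is a cube, $\varphi^{-1}(\mathrm{true})$ must then be a finite union of cubes, which is precisely a counting predicate.

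Set $N = K\cdot|Q|$ where $K$ is the truncation constant. The key step is the following saturation claim: whenever $\overline{x}(\sigma)\geq N$, we have $\varphi(\overline{x}+\mathbf{e}_\sigma)=\varphi(\overline{x})$ (where $\mathbf{e}_\sigma$ is the unit multiset at $\sigma$). Iterating this equality yields $\varphi(\overline{x}+t\mathbf{e}_\sigma)=\varphi(\overline{x})$ for all $t\geq 0$, giving the desired truncation in each coordinate.

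To prove the saturation claim, I would invoke shadow-permitting to pick a fair execution $E$ from $I(\overline{x})$ reaching a stable consensus $C^*$ with output $\varphi(\overline{x})$, together with shadow extensions $\mathfrak{E}_a$ around each agent. Since $\overline{x}(\sigma)\geq K|Q|$, the pigeonhole principle produces a state $q^*\in Q$ such that $C^*(a)=q^*$ holds for at least $K$ agents $a$ whose input was $\sigma$; fix one such $a$. The defining property of the shadow extension yields an execution $E'\in\mathfrak{E}_a$ in which $a$ and the shadow $a'$ share the same state at the step where $E$ reaches $C^*$, so $E'$ starts from $I(\overline{x})\cup\{a'\mapsto I(\sigma)\}=I(\overline{x}+\mathbf{e}_\sigma)$ and reaches $C^*\cup\{a'\mapsto q^*\}$. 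Since $q^*$ is already represented by at least $K$ agents of $C^*$, the truncatability lemma upgrades $C^*\cup\{a'\mapsto q^*\}$ into a stable consensus with the same output $\varphi(\overline{x})$. I then use eventuality of the fairness condition to extend the (possibly unfair) prefix $E'$ to a fair execution from $I(\overline{x}+\mathbf{e}_\sigma)$; well-specification forces its output to equal $\varphi(\overline{x}+\mathbf{e}_\sigma)$, and since this execution passes through a stable consensus with output $\varphi(\overline{x})$, the two values must coincide.

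The main technical hurdle is the joint choice of $a$ and $q^*$: $a$ must have input state $I(\sigma)$ so that adding its shadow corresponds to incrementing $\overline{x}(\sigma)$, while its eventual state $C^*(a)=q^*$ must be common enough in $C^*$ to trigger truncatability. The threshold $K|Q|$ is precisely what reconciles these two demands via pigeonhole; once $a$ is chosen, the rest of the argument is essentially an assembly of the defining properties of shadow extensions, stable consensus, truncatability, and eventual fairness.
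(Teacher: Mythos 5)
Your proposal is correct and follows essentially the same route as the paper's proof: pick a fair execution with shadow extensions, use pigeonhole with threshold $K\cdot|Q|$ to find a final state $q^*$ shared by at least $K$ agents of input $\sigma$, invoke the shadow extension to add one more agent ending in $q^*$, apply truncatability to keep a stable consensus, and close with eventual fairness plus well-specification. Your write-up is only slightly more explicit than the paper's in spelling out the reduction from the one-step saturation claim to ``counting predicate.''
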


\begin{proof}
Let $K$ be the truncation constant. We claim that $\varphi$
can be expressed as a combination of threshold predicates
with thresholds no larger than $|Q|\times K$.

More specifically, we prove an equivalent statement:
adding $1$ to an argument already larger than $|Q|\times K$ doesn't change the output
value of $\varphi$.
Let us call the state corresponding to this argument $q$.
Indeed, consider any corresponding input configuration.
We can build a fair execution starting in it with shadow extensions
around each agent.
As the predicate is correctly implemented, this fair execution has to
reach a stable consensus.
By assumption (and pigeonhole principle),
more than $K$ agents from the state $q$
end up in the same state.
By definition of shadow extension, there is an execution
starting with one more agent in the state $q$,
and reaching the same stable consensus but
with one more agent in a state with more than $K$ other ones
(which doesn't break the stable consensus).
Continuing this finite execution to a fair execution
we see that the value of $\varphi$ must be the same.
This concludes the proof.
\end{proof}

For the lower bound, we adapt the following lemma from
\cite{journals/dc/AngluinAER07}.

\begin{lemma}
        \label{lm:unreliableContainCounting}
        All counting predicates can be implemented
        by immediate observation protocols
        (possibly with unreliable communication),
        even if the fairness condition is replaced with an arbitrary
        different (activity-ensuring) one.
\end{lemma}

\begin{proof}
We have already observed that immediate observation population protocols
do not change if we add unreliability.
It was shown in \cite{journals/dc/AngluinAER07} that
immediate observation population protocols
implement all counting predicates.
        Moreover, the protocol $(k,k)\mapsto(k+1,k); (k,n)\mapsto(n,n)$
        provided there for threshold predicates
has the state of each agent increase monotonically. It is easy to see
that ensuring activity is enough for this protocol to converge to a state
where no more configuration-changing transitions can be taken.
Also, the construction for boolean combination of predicates
via direct product of protocols
used in \cite{journals/dc/AngluinAER07}
converges as long as the protocols for the two arguments converge.
Therefore it doesn't need any extra restrictions on the fairness condition.
\end{proof}

Theorem~\ref{thm:implementable}
now follows from the fact that all the  protocols with unreliable communication
are shadow-permitting (by Lemma~\ref{shadow-unreliable}) and
truncatable (by Lemma~\ref{lemma:truncatable}), therefore
they only implement counting predicates.
By Lemma~\ref{lm:unreliableContainCounting} all counting predicates
can be implemented.

\section{Non-monotonic impact of unreliability}
\label{sec:message-based}

In this section we observe that, surprisingly,
while delayed transmission protocols
and queued transmission protocols
are more powerful than immediate observation population protocols,
their unreliable versions are strictly less expressive
than  immediate observation population protocols (possibly with unreliable communication).

\begin{definition}
A protocol is \emph{fully asynchronous} if
        for each allowed transition $(C,A^\odot,C')$ the following conditions hold.
\begin{itemize}
        \item There is exactly one active agent, i.e. $|A^\odot|=1$.
        \item No passive agents change their states.
        \item Either the packets are only sent or the packets are only consumed,
                i.e. $\mathrm{Dom}( C_P)\subseteq\mathrm{Dom}( C'_P)$ or $\mathrm{Dom}( C_P)\supseteq \mathrm{Dom}( C'_P)$.
                Packet contents do not change,
                i.e.
                $C_P\mid_{\mathrm{Dom}(C_P)\cap\mathrm{Dom}(C'_P)} =
                C'_P\mid_{\mathrm{Dom}(C_P)\cap\mathrm{Dom}(C'_P)}$.
\end{itemize}
\end{definition}

It turns out that given unreliable communication such protocols
can check presence of states but cannot count.
As our old notion of ensuring activity doesn't force
any messages to be ever received, we need a slightly stronger fairness 
condition for any positive claims.

\begin{definition}
A fairness condition \emph{ensures communication}
if the following two conditions hold in every fair run.
\begin{enumerate}
\item
If the agent states $C_A$ do not change after some moment,
from each configuration occurring after some later moment 
there is no possible transition changing $C_A$.
\item
If the set of messages present in $C_P$ (ignoring multiplicities)
does not change after some moment, 
then for each configuration after some later moment
there is no possible transition that creates
a packet with a new message.
\end{enumerate}
\end{definition}

\begin{theorem}\label{thm:async-weak}
Fully asynchronous protocols with unreliable communication
compute exactly the predicates that are boolean combinations 
of positivity of single coordinates.

The upper bound holds under any eventual fairness condition,
while the lower bound requires a fairness conditions
that ensures communication.
\end{theorem}

The core idea of the proof is to
ensure that in a reachable situation
rare messages do not exist and cannot be created.
In other words, if there is a packet with some message,
or if such a packet can be created,
then there are many packets with the same message.
This makes irrelevant both the production of new messages
by agents, and the exact number of agents needing to follow
a particular sequence of transitions.
This idea has some similarity with the message saturation
construction from \cite{journals/corr/abs-1912-06578},
but here the production
of new messages might require consuming some of the old ones.
We choose the threshold for ``many'' packets
depending on the number of messages that do \emph{not} yet
have ``many'' packets.
The threshold ensures that a new message will become abundant
before we exhaust the packets for any previously numerous message.

\begin{definition}
        The \emph{in-degree} of a fully asynchronous protocol
        is the maximum number of messages consumed in a single transition.

        The \emph{supply} of a message $m\in M$ in configuration $C$
        is the number of packets in $C$ with the message $m$,
        i.e. $|C_P^{-1}(m)|$.

        Let $F(x,y,z,n,k)=(32(xyzn+1))^{32(xyzn+1)-2k}$.
        An \emph{abundance set} is the largest set $M^\infty\subseteq M$
        such that the supply of each message in $M^\infty$
        is at least $F(|Q|,|M|, d,|C_A|,|M^\infty|)$
        where $d$ is the in-degree.
        As $F$ decreases in the last argument,
        the abundance set $M^\infty(C)$ is well-defined.
        A message $m$ is \emph{abundant} in configuration $C$
        if it is in the abundance set, i.e. $m\in M^\infty(C)$.
        A message $m$ is \emph{expendable} at some moment
        in execution $E$ if it is abundant in some configuration
        that has occurred in $E$ before that moment.
        A packet is \emph{expendable} if it bears an expendable
        message.

        An execution $E$ is \emph{careful}
        if no transition that decreases the supply of non-expendable messages
        changes agent states.
\end{definition}

\begin{remark}
The function $F$ is chosen to make its rate of growth obviously sufficient
in the following calculations. 
A much smaller function would suffice
for a more tedious analysis.
\end{remark}

\begin{lemma}\label{LemmaCarefulExecutions}
Every  fully asynchronous protocol with unreliable communication
has a careful fair execution starting from any configuration
without message packets.

Moreover, if the protocol is well-specified,
there is a careful fair execution that runs each packet-consuming 
transition twice in a row, failing to update the state the first time,
until  stable consensus is reached.
\end{lemma}

\begin{proof}
We start with an execution with only the initial configuration.

In the first phase,
as long as it is possible to create a packet with a non-expendable message
(without making the execution careless),
we do it while consuming the minimal possible number 
of packets with expendable messages.
After creating each packet we increase the abundance set if possible.

In the second phase,
a long as it is possible to consume a packet with a non-expendable message,
we do it (but fail to update the agent states).

In the third phase we reach a stable consensus by consuming the minimal
number of packets.
We call the end of the third phase the target moment.
Afterwards we pick an arbitrary fair continuation.

We now prove that each abundance set
with a new message
obtained during the first phase includes
all the previous abundance sets.
We only use the ways to
create a new non-expendable packet
that do not require consuming any non-expendable packets.
Indeed, consuming a non-expendable packet
is not allowed to change the internal state
by definition of carefulness,
and cannot create any new messages by definition
of a fully asynchronous protocol.
Note that reaching the internal state
that can create a new non-expendable packet
can take most $|Q|\times n$ transitions
as all the expendable packets are already available for consumption
and thus there is no reason to repeat the same internal state 
of the same agent twice.
Therefore creating an additional non-expendable packet
can consume at most $|Q|\times n\times d$ packets.
To make the supply of some message reach
$F(|Q|,|M|,d,n,k+1)$,
we need to
repeat this at most $F(|Q|,|M|,d,n,k+1)\times |M|$ times
consuming at most
$F(|Q|,|M|,d,n,k+1)\times M|\times |Q|\times n\times d$ expendable packets.
We might consume twice as many expendable packets if we want to 
fail every other packet consumption transition.
As $3\times F(|Q|,|M|,n,d,k+1)\times |M|\times |Q|\times n \times d<F(|Q|,|M|,d,n,k)$,
all the expendable messages together with this message
form an abundance set.

In the second phase,
we run consumption in at most $|Q|$ states;
reaching each of them requires at most $|Q|$ transitions.
Thus the state changes
consume at most $|Q|^2\times d$ expendable packets.
Note that consuming a non-expendable packet requires consuming
at most $d$ expendable packets.
As the supply of each non-expendable message is less than
$F(|Q|,|M|,d,n,|M^\infty|+1)$,
we consume at most $d\times(|Q|^2+|M|\times F(|Q|,|M|,d,n,|M^\infty|+1))$.
We also could have spent twice as many expendable messages if the 
non-expendable messages were not the limiting factor.
Therefore we still have more than 
$F(|Q|,|M|,d,n,|M^\infty|+1)>4\times |Q|\times n\times d$
packets with each expendable message left
by the time there are
no non-expendable packets
that can be received in a reachable state
and no possibility to create a non-expendable packet.

A reachable stable consensus exists
if the protocol computes some predicate.
As it is impossible to produce or consume new non-expendable messages,
we cannot violate the carefulness property.
Moreover, we can reach it while spending at most
$|Q|\times n\times d$ expendable packets 
(or twice as many if we fail to update the state every second time).
That many packets are available,
so producing new expendable packets is not required.

We see that the construction indeed provides a careful fair execution.
Thus the lemma is proven.
\end{proof}

As the execution obtained via the previous lemma 
wastes a lot of messages, we can add one more agent 
to make use of those messages.

\begin{lemma}
\label{lm:message-based-copycat}
Consider a fully asynchronous protocol with unreliable communication
that computes some predicate.

Then for any input configuration, adding one more agent in an already 
present
input state
cannot change the value of the predicate.
\end{lemma}

\begin{proof}
Consider a careful execution constructed by Lemma~\ref{LemmaCarefulExecutions}.
Consider an extra agent that we want to use as a copycat of an existing one,
which we call target.

If a transition performed by the target agent sends messages, so does
the copycat agent. 
If a transition requires receiving messages and the target agent updates
the state,
we cancel the previous transition where the target agent failed to update the state
after consuming the same messages,
and let the copycat agent receive those messages and update the state.
Thus the copycat agent always mimics the state of the target agent.

Additionally, we extend phase two of the execution to consume the non-expendable
messages sent by the copycat agent.
They are the same as the target agent has sent, 
and
there is a reserve of expendable messages for 
consuming these non-expendable messages (those that can be consumed
in some reachable state).

As consumption of expendable messages did not allow to emit
any non-expendable messages after reaching the stable consensus, the same must 
be true when we add the copycat agent as the set of reachable agent states 
without producing or consuming non-expendable messages
is the same.
But then the set of all the reachable states is the same,
and we get a stable consensus with the same answer.

As the protocol is well-specified, this concludes the proof.
\end{proof}

\begin{corollary}
\label{lm:async-only-positivity}
A predicate computed by a fully asynchronous protocol
with unreliable communication
only depends on which coordinates are positive.
\end{corollary}

\begin{proof}
Consider two configurations with the same set of represented 
input states.
By repeated addition of copycat agents we can prove that the 
predicate value for either of configurations is the same as the 
predicate value for their union.
\end{proof}

It is clear that the predicates that only depend on the set of
positive coordinates can be computed.

\begin{lemma}
\label{lm:async-strong}
For any fairness condition ensuring communication, and for any 
predicate only depending on positivity of arguments,
there is a fully asynchronous protocol computing that predicate.
\end{lemma}

\begin{proof}
We just describe the protocol informally.
The messages correspond to the input states.
The states correspond to nonempty states of the input state
(which are known to the agent to be initially present).
An agent can send a message corresponding to an initial state 
in the agent's set.
An agent can receive a message and add the corresponding initial state
to the set.
An agent has output value equal to the value of the predicate on the 
input where all the input states from the agent's set get the value $1$,
while the others get $0$.

Ensuring communication implies that the only stable situation is when
all the initially present input states are reflected in message packets,
and are also reflected in the sets of all the agents.
\end{proof}

The theorem now follows from Corollary~\ref{lm:async-only-positivity}
and Lemma~\ref{lm:async-strong}.

\begin{remark}
This result doesn't mean that fundamentally asynchronous nature
of communication prevents us from using any expressive models for
verification of unreliable systems.
It is usually possible to keep enough state to
implement, for example,
immediate observation via request and response.
\end{remark}

\section{Conclusion and future directions}
\label{sec:conclusion}

We have studied unreliability based on message loss,
a practically motivated approach
to fault tolerance in population protocols.
We have shown that inside a general framework
of defining  protocols with unreliable communication we can prove
a specific structural property
that bounds
the expressive power of  protocols with unreliable communication
by the
expressive power of
immediate observation population protocols.
Immediate observation population protocols
permit verification of many useful properties,
up to well-specification, correctness and
reachability between counting sets,
in polynomial space.
We think that relatively low complexity of verification
together with inherent unreliability tolerance
and locally optimal expressive power under atomicity violations
motivate further study and use of such protocols.

It is also interesting to explore if
for any class of protocols adding unreliability makes
some of the verification tasks easier.
Both
complexity and expressive power
implications
of unreliability
can be studied
for models with larger per-agent memory,
such as community protocols, PALOMA and
mediated population protocols.
We also believe that some models even more restricted than
community protocols but still permitting a multi-interaction
conversation are an interesting object of study both
in the reliable and unreliable settings.

\subsection*{Acknowledgements}
I~thank Javier Esparza
for useful discussions and
the feedback on the drafts of the present article.
I~thank Chana Weil-Kennedy for useful discussions.

This work is an extended version of \cite{DBLP:conf/algosensors/Raskin21},
differing in the inclusion of full proofs as well as more precise characterisation
of expressive power of fully asynchronous protocols.
I~thank the anonymous reviewers both of the previous and of the current
version
for their valuable feedback on presentation.

\bibliographystyle{plain}
\bibliography{unreliable-population-protocols}

\end{document}